\newtheorem{theorem}{Theorem}
\begin{document}
%
\title{High Rate Communication over One-Bit Quantized Channels via Deep Learning and LDPC Codes}
%
%
%

\author{Eren Balevi and
        Jeffrey G. Andrews \\ 
\IEEEauthorblockA{
		Department of Electrical and Computer Engineering \\
    The University of Texas at Austin, TX 78712, USA\\
    Email: erenbalevi@utexas.edu, jandrews@ece.utexas.edu}	
}

\maketitle

\begin{abstract}
This paper proposes a method for designing error correction codes by combining a known coding scheme with an autoencoder. Specifically, we integrate an LDPC code with a trained autoencoder to develop an error correction code for intractable nonlinear channels. The LDPC encoder shrinks the input space of the autoencoder, which enables the autoencoder to learn more easily. The proposed error correction code shows promising results for one-bit quantization, a challenging case of a nonlinear channel. Specifically, our design gives a waterfall slope bit error rate even with high order modulation formats such as 16-QAM and 64-QAM despite one-bit quantization. This gain is theoretically grounded by proving that the trained autoencoder provides approximately Gaussian distributed data to the LDPC decoder even though the received signal has non-Gaussian statistics due to the one-bit quantization. 
\end{abstract}

\begin{IEEEkeywords}
Error correction codes, autoencoder, nonlinear channels, one-bit quantization
\end{IEEEkeywords}

\section{Introduction}
Popular error correction codes including turbo codes, low density parity check (LDPC) codes and polar codes produce error rates very close to the Shannon limit for linear additive white Gaussian noise (AWGN) channels. However, these codes often do not perform well under hardware constraints and/or impediments. Two key trends in wireless communication systems are the march towards higher carrier frequencies and a very large number of antennas, both of which will introduce increasingly non-ideal hardware constraints such as low-resolution quantization and other nonlinear distortions.  Hence, designing an error correction code that is robust to such distortions is important for future communication systems, even if such distortions are analytically intractable.

In this paper, we propose a methodology to develop efficient error correction codes for one-bit quantization. The main reason behind the selection of this nonlinear channel model is associated with the fact that one-bit quantization brings severe nonlinearities. Thus, a channel code that can cope with one-bit quantization should be able to resist less severe nonlinearities. Furthermore, it is easy to model the nonlinearities due to one-bit quantization, which can be done by taking the sign of the real and imaginary part of the signals, unlike other RF nonlinearities. Our methodology relies on combining a state-of-the-art code that is optimized for a static linear AWGN channel (which can be seen as an outer code) with an autoencoder (which can be seen as an inner code) so as to capture the system dynamics via learning/training. This can also be interpreted as combining the current knowledge in coding theory with the recent advances in deep learning in an attempt to end up with novel error correction codes. 

Employing a neural network for error correction codes dates back to late eighties. More precisely, \cite{BruckBlaum89} shows how to decode linear block codes. Similarly, the Viterbi decoder was implemented with a neural network for convolutional codes in the late nineties \cite{WangWicker96}, \cite{HamalainenHenriksson99}. A simple classifier is learned in these studies instead of a decoding algorithm. This leads to a training dataset that must include all codewords, which makes them infeasible for most codes due to the exponential complexity. Recently, it was shown that a decoding algorithm could be learned for structured codes \cite{Gruber17}, however this design still requires a dataset with at least $90\%$ percent of the codebook, which limits its practicality to small block lengths. To learn decoding for large block lengths, \cite{KimViswanath18} trained a recurrent neural network for small block lengths that can generalize well for large block lengths. Although there are many papers that propose a deep learning-based decoding algorithm, there are only a few papers that aim to learn an encoder \cite{JiangViswanath18}, \cite{Kosaian18}. 

In this paper, we design an error correction code, i.e., learn an encoder-decoder pair for a severe nonlinear channel model: a one-bit quantized AWGN channel. For this purpose, we train an autoencoder, and then incorporate an LDPC code to this autoencoder.  In the case of QPSK or BPSK modulation one-bit quantization corresponds to hard decision decoding and only leads to a few dB signal-to-noise-ratio (SNR) loss. However, for high order modulations one-bit quantization leads to a very poor error rate, since the real and the imaginary part of the signals carry more than one bit information. The closest paper to this work is \cite{BalAndDeepECC}, which proposed to integrate a turbo code to an autoencoder to handle the detrimental effects of one-bit quantization for QPSK and 16-QAM signaling. The main difference of this paper is to (i) use an LDPC code instead of a turbo code; (ii) utilize a modified autoencoder architecture; and (iii) propose a simpler, but efficient training policy that gives us better error rate for high order modulations, e.g., 64-QAM can be operable for one-bit quantization at sufficiently low SNRs. 

The main contributions of this paper are as follows. First, we propose sending symbols at a faster-than-Nyquist rate to have a large enough learning capacity for the autoencoder that is integrated with an LDPC code. Then, we theoretically show that with this transmission and the proposed autoencoder architecture (hat has infinite width neural layers) it is possible to provide Gaussian distributed data to the LDPC decoder despite the nonlinearities of one-bit quantization. In what follows, we evaluate the efficiency of our error correction code with simulations for  practical finite layer widths. The proposed channel code can make high order modulations such as 16-QAM and 64-QAM operable even if one-bit quantization is employed. To be more precise, our code can approach the performance of the LDPC codes in 802.11n and DVB-S2 standards that decode unquantized samples without increasing bandwidth. This apparently brings significant spectral efficiency gain for low-resolution receivers by enabling high order modulations.



\section{Autoencoder Empowered Error Correction Coding}
\label{Autoencoder Empowered Error Correction Coding}
Conventional error correction codes do not provide any guarantee to obtain almost zero bit error rate at sufficiently low SNRs for nonlinear channels or non-Gaussian noise. To have a low bit error rate for more challenging nonlinear environments, we propose to integrate an autoencoder with an existing error correction code. This results in a concatenated code such that the outer code is a known state-of-the-art code and the inner code is a learned or trained autoencoder. This idea is illustrated for an LDPC code in Fig. \ref{fig:proposed_method}.
\begin{figure*} [!t]
\centering
\includegraphics[width=6in]{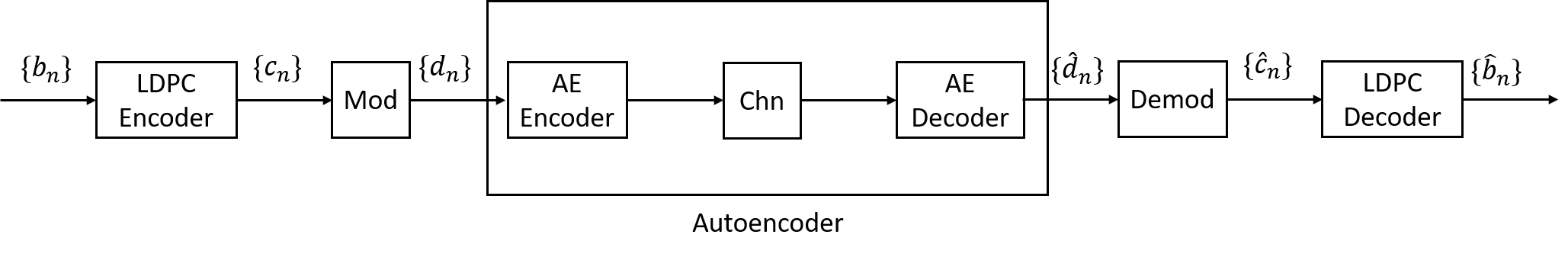}
\caption{The proposed error correction code that integrates an LDPC code with an autoencoder.}
\label{fig:proposed_method}
\end{figure*}
In accordance with that, the input bits $b_n$ are first encoded with an LDPC encoder. This LDPC encoder produces a codeword $c_n$ that has $B$ bits. This means that the LDPC encoder output block size is $B$. Then, these coded bits are modulated with an $M$-ary modulation method giving $B/\log_2(M)$ symbols per codeword, which are represented as $d_n$. The coded symbols $d_n$ are then given to the autoencoder.

Processing the entire symbols of a codeword with an autoencoder yields excessive computational complexity for large codeword lengths. For instance, taking a reasonable codeword length of $1000$ as an input to an autoencoder yields millions of parameters. Therefore, these coded symbols are broken into small blocks of $N$, in which $N\ll B/\log_2(M)$. More precisely, a codeword is written in terms of the modulated symbols as
\begin{equation} \label{coded_sym}
    \mathbf{d} = [\mathbf{d}_1^T\ \mathbf{d}_2^T \cdots \ \mathbf{d}_{S}^T]^T
\end{equation}
where $S=B/(N\log_2(M))$ and
\begin{equation}\label{coded_symbol}
    \mathbf{d}_{i} = [d_{iN-N+1}\ d_{iN-N+2}\ \cdots \ d_{iN}]^T.
\end{equation}
For \eqref{coded_sym} each $\mathbf{d}_i$ is processed separately, but with the same weights or parameters of the autoencoder. This means that a single set of weights (or a single autoencoder) is learned for all $\mathbf{d}_{i}$ for $i=1,2,\cdots,S$. This brings huge complexity savings.

There are many alternatives for the autoencoder architecture. For simplicity, we utilize a few fully connected layers as was done in \cite{BalAndDeepECC}, \cite{BalAnd19}. Since the same weights are used for each block, the overall architecture can be considered as a one-dimensional convolutional neural network with stride $N$. This autoencoder architecture is given in Table \ref{tab:AE}. In addition to the fully connected layers, there is a lambda layer meaning it does not have trainable weights. This lambda layer includes the transmission power normalization, physical channel, pulse shape, sampling and quantization.  
\begin{table} [!t] 
\renewcommand{\arraystretch}{1.3}
\caption{The autoencoder architecture}
\label{tab:AE}
\centering
\begin{tabular}{c|c|c|c|c}
    \hline
       Layer & Type & Size & Activation & Weights\\
    \hline
    \hline
$l_0$: Input & Coded Symbols & N & - & $-$\\
    \hline
$l_1$: Hidden Layer-1 & Fully Connected & GN & Linear & $\mathbf{\Theta}_{1}$\\
    \hline
$l_2$: Lambda layer & Channel & GN & - & - \\
    \hline
$l_3$: Hidden Layer-2 & Fully Connected & KN & ReLU & $\mathbf{\Theta}_{2}$\\
    \hline
$l_4$: Hidden Layer-3 & Fully Connected & KN & ReLU & $\mathbf{\Theta}_{3}$\\
    \hline
$l_5$: Hidden Layer-4 & Fully Connected & KN & ReLU & $\mathbf{\Theta}_{4}$\\
    \hline
$l_6$: Output & Fully Connected & N & Linear & $\mathbf{\Theta}_{5}$\\
		\hline
\end{tabular}
\end{table}

The autoencoder takes the coded symbols as blocks with length $N$ and generates
\begin{equation}
    \mathbf{e}_i = \mathbf{\phi_1}(\mathbf{\Theta}_{1}\mathbf{d}_i+\mathbf{b}_1) = \mathbf{\Theta}_{1}\mathbf{d}_i+\mathbf{b}_1
\end{equation}
where $\phi_1$ is an identity function, because there is a linear activation function for the first hidden layer as can be seen in Table \ref{tab:AE}. Furthermore, $\mathbf{\Theta}_{1}$ and $\mathbf{b}_1$ correspond to the trainable weights (in matrix form) and the biases (in vector form) of the first hidden layer. Both these weights and biases are initialized with Gaussian random variables that have zero-mean, and $\sigma_{\theta}^2$ and $\sigma_{b}^2$ variance, respectively as is standard practice \cite{GlorotBengio10}, \cite{HeSun15}. 
Notice that
\begin{equation}\label{fir_lay}
    \mathbf{e}_i = [e_{i1}\ e_{i2}\ \cdots \ e_{iGN}]^T,
\end{equation}
where $GN$ is the width of the first hidden layer.

The main idea to leverage an autoencoder for a coding scheme is to tackle all kinds of channel and hardware impediments so as to perfectly (or with very small error probability) transfer the coded symbols from a transmitter to a receiver. For this purpose, the autoencoder has to have a large capacity. This is the reason for encoding each coded symbol with $G$ neurons. This obviously decreases the bandwidth efficiency $G$-fold if conventional orthogonal transmission methods are employed. A  better approach is to send the input symbols faster, which is known as non-orthogonal faster-than-Nyquist transmission, at the expense of creating inter-symbol interference (ISI) and colored noise \cite{AndersonOwall}. This method enables us to use a large value for $G$ without any increase in bandwidth at the expense of degrading the minimum distance between the encoded LDPC symbols, since an autoencoder does not have isometry property, i.e., it does not preserve the distances among different inputs. 
Furthermore, in traditional communication systems faster-than-Nyquist signaling heavily increases the demodulation complexity for a large $G$. However, this is not an issue when it comes to an autoencoder, because a (neural) decoder with the same complexity is used irrespective of how symbols are transmitted.

The symbols in \eqref{fir_lay} are transmitted as
\begin{equation}
    s(t) = \sqrt{\rho}\sum_{i=1}^{S}\sum_{n=1}^{GN}e_{in}h\left(t-(i-1)NT-\frac{nT}{G}\right)
\end{equation}
where $\rho$ is the transmission power, $h(t)$ is the real pulse shape and $T$ is the symbol period for orthogonal transmission. The received continuous time signal over an AWGN channel is filtered with a matched filter $h(-t)$ to yield
\begin{equation}\label{rec_sig}
    y(t) = (s(t) + z(t)) * h(-t),
\end{equation}
where $z(t)$ is a zero-mean Gaussian noise with variance $\sigma_z^2$ and $*$ denotes linear convolution. Writing \eqref{rec_sig} in integral form gives us
\begin{equation} \label{rec_sig_int}
    y(t) = \int_{-\infty}^{\infty}(s(\tau)+z(\tau))h(\tau-t)d\tau
\end{equation}
Sampling \eqref{rec_sig_int} at $t=(i-1)NT+\frac{kT}{G}$ with a sampling period of $\frac{kT}{G}$ makes our received symbols
\begin{equation}\label{sampled_rec_sig}
    y_{ik} = \sqrt{\rho}\sum_{i=1}^{S}\sum_{n=1}^{GN}e_{in}g[k-n]+z_{ik}
\end{equation}
where
\begin{eqnarray}\nonumber 
    g[k-n]&=&\int_{-\infty}^{\infty}h\left(\tau-(i-1)NT-\frac{nT}{G}\right) \times \\ \nonumber
     && h\left(\tau-(i-1)NT-\frac{kT}{G}\right)d\tau
\end{eqnarray}
and
\begin{equation}\nonumber
    z_{ik} =\int_{-\infty}^{\infty}z(\tau)h\left(\tau-(i-1)NT-\frac{kT}{G}\right)d\tau.
\end{equation}

Sending symbols faster than the symbol period results in inter-symbol interference and colored noise. To make this clearer, we consider the vector-matrix representation of the received signal 
\begin{equation}\label{unquant_samp}
    \mathbf{y}_i = [y_{i1}\ y_{i2}\ \cdots \ y_{iGN}]^T,
\end{equation}
which can be expressed as
\begin{equation}\label{ISI_Model}
    \mathbf{y}_i = \mathbf{G}_{\rm ISI}\mathbf{e}_i + \mathbf{z}_i
\end{equation}
where $\mathbf{G}_{\rm ISI}$ is a $GN \times GN$ Toeplitz matrix, and its first row becomes $[g[0]\ g[1]\ \cdots\ g[GN-1]]$. Notice that $\mathbf{G}_{\rm ISI}$ would be an identity matrix if orthogonal transmission was used. The correlation of the noise samples is
\begin{equation}\label{corr_noise}
    \mathbb{E}[z_{ik}z_{in}] = \sigma_z^2g[k-n].
\end{equation}

The sampled signal in \eqref{sampled_rec_sig} is quantized before further processing as 
\begin{equation}\label{quan_layer}
    r_{ik} = \mathcal{Q}(y_{ik}). 
\end{equation}
In what follows, the decoder part of the autoencoder takes a block of $GN$ samples of 
\begin{equation} \label{proc_sig}
    \mathbf{r}_i = [r_{i1}\ r_{i2}\ \cdots \ r_{iGN}]^T.
\end{equation}
The signal in  \eqref{proc_sig} is decoded with the second, third and fourth hidden layers, which have a width of $KN$ and ReLU activation function, and the output layer, which has a width of $N$ and a linear activation function. This gives us the estimate of \eqref{coded_symbol}. Mathematically,
\begin{equation}\label{rec_coded_symbol}
    \mathbf{\hat{d}_i} = \mathbf{\Theta}_{5}\mathbf{\phi_4}(\mathbf{\Theta}_{4}\mathbf{\phi_3}(\mathbf{\Theta}_{3}\mathbf{\phi_2}(\mathbf{\Theta}_{2}\mathbf{r}_i+\mathbf{b}_2)+\mathbf{b}_3)+\mathbf{b}_4)+\mathbf{b}_5.
\end{equation}
The parameters $\mathbf{\Theta}_{1},\mathbf{\Theta}_{2},\mathbf{\Theta}_{3},\mathbf{\Theta}_{4},\mathbf{\Theta}_{5}$ and the biases $\mathbf{b}_1, \mathbf{b}_2, \mathbf{b}_3, \mathbf{b}_4, \mathbf{b}_5$ are denoted as $\mathbf{W}$ for brevity. These are optimized according to a squared error loss function as
\begin{equation} \label{error_func}
    \mathbf{W}^* = 
 \underset{\mathbf{W}}{\text{arg\ min}} ||\mathbf{d_i}-\mathbf{\hat{d}_i}||_2^2  \\ 
\end{equation}
where $\mathbf{d_i}$ and $\mathbf{\hat{d}_i}$ are defined in \eqref{coded_symbol} and \eqref{rec_coded_symbol}, respectively. Gathering all these blocks constitutes the estimate of one transmitted codeword as
\begin{equation}
    \mathbf{\hat{d}} = [\mathbf{\hat{d}_1}^T\ \mathbf{\hat{d}_2}^T \cdots \ \mathbf{\hat{d}_S}^T]^T.
\end{equation}

\section{Theoretical Guarantees}
\label{Theoretical Guarantees}
The output of the autoencoder can be written according to its input as
\begin{equation} \label{ae_out}
\mathbf{\hat{d}_i} = \mathbf{d_i} + \mathbf{v_i}
\end{equation}
where $\mathbf{v_i}$ refers to the residual error stemming from the noise and quantization. Since \eqref{ae_out} is given to the LDPC decoder, which is optimized according to the data that has Gaussian statistics, it is important to determine the distribution of $\mathbf{v_i}$. Next we prove that $\mathbf{v_i}$ has a Gaussian distribution.

\begin{theorem}
The autoencoder provides Gaussian distributed data to the LDPC decoder, i.e., $\mathbf{v_i}$ has a Gaussian distribution when the autoencoder has infinitely large width neural layers and is trained with gradient descent for faster-than-Nyquist transmission and one-bit quantization. 
\end{theorem}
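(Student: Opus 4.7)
The plan is to track how the distribution of the signal evolves through each stage of the autoencoder and channel, and then invoke infinite-width Gaussian-process theory precisely at the point where one-bit quantization destroys Gaussianity. The easy stages come first. The first hidden layer is linear with weights $\mathbf{\Theta}_1$ and biases $\mathbf{b}_1$ drawn from zero-mean Gaussians, so $\mathbf{e}_i = \mathbf{\Theta}_1 \mathbf{d}_i + \mathbf{b}_1$ is jointly Gaussian for any fixed block $\mathbf{d}_i$. The ISI model \eqref{ISI_Model} is an affine map plus independent Gaussian noise with covariance given by \eqref{corr_noise}, so the pre-quantization vector $\mathbf{y}_i$ is Gaussian as well. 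This part uses only standard closure properties of Gaussian vectors under affine maps.

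The crux is the quantizer $\mathcal{Q}(\cdot)$: the decoder input $\mathbf{r}_i \in \{-1,+1\}^{GN}$ is distinctly non-Gaussian, so Gaussianity must be restored by the decoder itself. For this I would invoke the classical Gaussian-process limit for infinitely wide feedforward networks with Gaussian-initialized weights (Neal; Lee et al.), together with the Neural Tangent Kernel result (Jacot et al.) that gradient-descent training under a squared-error loss keeps the network in the linearized regime, so the trained output is a Gaussian process whose mean tracks the training target. The decoder layers $l_3$--$l_6$ satisfy these hypotheses verbatim: three wide ReLU layers and a linear readout, trained on the MSE loss \eqref{error_func}. Conditional on any realization of $\mathbf{r}_i$, the output $\mathbf{\hat{d}}_i$ is therefore Gaussian with mean $\mathbf{d}_i$ plus a zero-mean Gaussian fluctuation, so $\mathbf{v}_i$ is Gaussian as claimed.

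The main obstacle I expect is making the NTK/GP limit rigorous in this particular pipeline. Two subtleties arise. First, the decoder's input is a random discrete vector produced by quantizing a Gaussian, not a fixed design point, so the Gaussian-process limit applies only pointwise in $\mathbf{r}_i$, and the unconditional distribution of $\mathbf{v}_i$ is a priori a mixture of Gaussians. I would close this gap by arguing that the NTK-limit covariance concentrates across the support of $\mathbf{r}_i$ (both covariance and mean depend on $\mathbf{r}_i$ only through the decoder kernel, which is smooth in its argument), so the mixture collapses to a single Gaussian in the limit. Second, the encoder and decoder are trained jointly; because $l_1$ is linear one can absorb $\mathbf{\Theta}_1$ into an effective input kernel without breaking the Gaussian-process structure of the decoder, but the joint gradient dynamics must still be shown to remain in the NTK regime. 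Once these technicalities are settled, the theorem follows by composition: Gaussian into the decoder from encoder and channel, Gaussian out of the decoder by the infinite-width limit, hence Gaussian residual $\mathbf{v}_i$ delivered to the LDPC stage.
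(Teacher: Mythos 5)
Your proposal is correct and follows essentially the same route as the paper: the infinite-width Gaussian-process limit at initialization (with the one-bit quantizer playing the role of the activation of the channel layer, so that Gaussianity of the subsequent pre-activations comes from the CLT over infinitely many Gaussian weights) combined with the linearization of gradient-descent dynamics from Lee et al.\ to preserve Gaussianity of the trained output. The only substantive differences are presentational --- you split the pipeline at the quantizer and collapse the resulting mixture of Gaussians via concentration of the decoder kernel, whereas the paper runs the covariance recursion $K^{(l)}$ directly through $\phi_2=\mathcal{Q}$ --- plus one point you should add: the paper explicitly notes that the non-differentiable $\mathcal{Q}(\cdot)$ must be smoothed (approximated by a sigmoid) for the first-order Taylor/Jacobian step in \eqref{out_ae} to make sense.
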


\begin{proof}
The autoencoder architecture in Table \ref{tab:AE} can be expressed layer-by-layer as
\begin{equation} 
\begin{split}
l_0: & z^{(0)} = \mathbf{d}_i, x^{(0)} = z^{(0)}  \\  
l_1: & z^{(1)} = \mathbf{\Theta}_{1}x^{(0)} +\mathbf{b}_1 , x^{(1)} = {\phi_1}(z^{(1)} ) \\  
l_2: & z^{(2)} =  \mathbf{y}_i = f(x^{(1)}), x^{(2)}  = \phi_2(z^{(2)}) \\   
l_3: & z^{(3)} =\mathbf{\Theta_{2}}x^{(2)} + \mathbf{b_{2}}, x^{(3)}  = \phi_3(z^{(3)}) \\  
l_4: & z^{(4)} = \mathbf{\Theta_{3}}x^{(3)} + \mathbf{b_{3}}, x^{(4)}  = \phi_4(z^{(4)}) \\  
l_5: & z^{(5)} = \mathbf{\Theta_{4}}x^{(4)} + \mathbf{b_{4}}, x^{(5)}  = \phi_5(z^{(5)}) \\  
l_6: & z^{(6)} = \mathbf{\Theta_{5}}x^{(5)} + \mathbf{b_{5}} \\
\end{split}
\end{equation}
where $f(\cdot)$ corresponds to the lambda layer except quantization and $ \mathbf{y}_i$ is defined in \eqref{unquant_samp}. Also, $\phi_{2}(\cdot) = \mathcal{Q}(\cdot)$. Since all the weights and biases are initialized with Gaussian random variables, for each unit (or neuron) in the $l^{th}$ layer $z_i^{(l)}|x^{(l-1)}$ is an identical and independent Gaussian random variable with zero mean and covariance 
\begin{equation} \label{cov_l}
K^{(l)}(z,\hat{z}) = \sigma_b^2 + \sigma_\theta^2 \mathbb{E}_{z_i^{(l-1)}} [\phi_{l-1}(z_i^{(l-1)})\phi_{l-1}(\hat{z}_i^{(l-1)})]
\end{equation}
except for $l=2$. For the second layer $z_i^{(2)}|x^{(1)}$ is also a Gaussian random variable due to \eqref{ISI_Model} and Gaussian noise. Specially, $z_i^{(2)}|x^{(1)}$ has zero-mean and its covariance is as given by \eqref{corr_noise}.

As the width goes to infinity, \eqref{cov_l} can be written in integral form as given in \eqref{cov_int}. 
\begin{figure*}[!h]
\small
\begin{equation}\label{cov_int}
    \underset{ {N\rightarrow   \infty}}{\text{lim}}K^{(l)}(z,\hat{z}) = \int\int \phi_{l-1}(z_i^{(l-1)})\phi_{l-1}(\hat{z}_i^{(l-1)}) 
\mathcal{N}\left(z,\hat{z};0,\alpha_\theta^2\left[
  \begin{array}{cc}
  K^{(l-1)}(z,z) & K^{(l-1)}(z,\hat{z})  \\
  K^{(l-1)}(\hat{z},z) & K^{(l-1)}(\hat{z},\hat{z})   
  \end{array}
\right] + \alpha_b^2 \right)dzd\hat{z}.\\
\end{equation}
\end{figure*}
\begin{figure*}[!t]
\centering
\subfigure[16-QAM]{
\label{fig:648_16qam}
\includegraphics[width=3.25in]{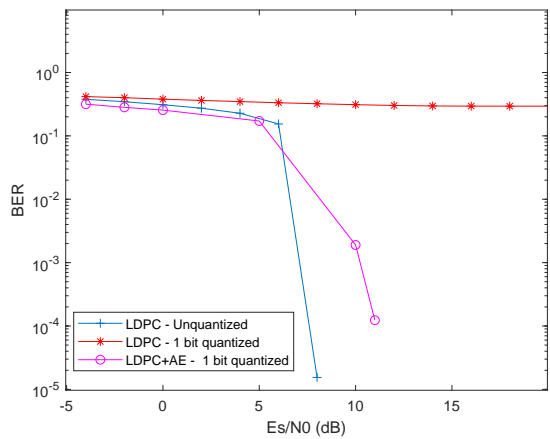}}
\qquad
\subfigure[64-QAM]{
\label{fig:648_64qam}
\includegraphics[width=3.25in]{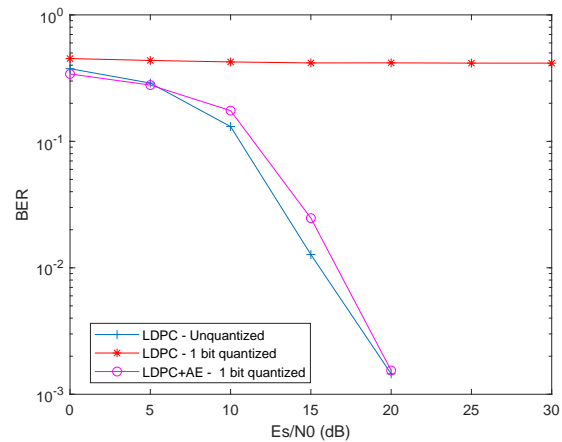}}
\caption{The error rate for the code rate $\frac{1}{2}$ LDPC code with a codeword length of 648 bits}
\end{figure*}
To be more compact, the double integral in \eqref{cov_int} can be represented with a function such that 
\begin{equation}
\underset{ {N\rightarrow \infty}}{\text{lim}}K^{(l)}(z,\hat{z}) = F_{l-1}(K^{(l-1)}(z,\hat{z})).
\end{equation}
Hence, $z^{(6)}|z^{(0)}$ is a Gaussian process with zero mean and covariance 
\begin{equation}
K^{(6)}(z,\hat{z}) = F_5(F_4(F_3(F_2(F_1(F_0(K^{(0)}(z,\hat{z})))))))
\end{equation}
when $N \rightarrow \infty$. This means that the output of the autoencoder yields Gaussian distributed data in the initialization phase.

During training, the parameters are iteratively updated at time $n$ as
\begin{equation}
\mathbf{W}^n = \mathbf{W}^{n-1}-\eta \nabla_{\mathbf{W}^{n-1}} L(\mathbf{W}^{n-1})
\end{equation}
where $\mathbf{W}^n = \{\mathbf{\Theta^n_{1}}, \cdots, \mathbf{\Theta^n_{5}}, \mathbf{b^n_1}, \cdots, \mathbf{b^n_5} \}$, and $L(\cdot)$ is the loss function. In parallel, the output $z^{(6)}$ is updated as 
\begin{equation} \label{out_upd}
z^{(6),n} = z^{(6),{n-1}} + \nabla_{\mathbf{W}^{n-1}} (z^{(6),{n-1}})(\mathbf{W}^n - \mathbf{W}^{n-1}).
\end{equation}
The gradient term in \eqref{out_upd} is a nonlinear function of the parameters. Nevertheless, it was recently proven in \cite{JaehoonLee19} that as the width goes to infinity, this nonlinear term can be linearized via a first-order Taylor expansion. More precisely,
\begin{equation} \label{out_ae}
z^{(6),n}  =  z^{(6),0} + \nabla_{\mathbf{W}_{0}}(z^{(6),0})(\mathbf{W}^{n}-\mathbf{W}^0) + \mathcal{O}(N^{-0.5} )
\end{equation}
where the output at the initialization or $z^{(6),0}$ is Gaussian as discussed above. Since the gradient (and hence the Jacobian matrix) is a linear operator, and a linear operation on a Gaussian process results in a Gaussian process, the output of the autoencoder for a given input (or $z^{(6),n}|z^{(0),n}$) is a Gaussian process throughout training with gradient descent. It is worth emphasizing that having a piece-wise quantization function in the second layer does not violate the aproximation in \eqref{out_ae}, because $\mathcal{Q}(\cdot)$ can easily be approximated to a sigmoid function.
\end{proof}


\section{Simulations}
\label{Simulations}
We simulate the performance of the proposed coding method by combining the autoencoder with (i) the LDPC code that has a code rate of $\frac{1}{2}$ and a codeword length of 648 bits and (ii) the LDPC code that has a code rate of $\frac{1}{2}$ and a codeword length of 64800 bits. These LDPC codes have been used in 802.11n and DVB-S2 standards, respectively. Our performance metric is the bit error rate (BER) with respect to the energy per symbol. Throughout the simulation, the symbols are sent $10$ times faster than Nyquist rate, which yields a strong ISI. 

The 6-layer autoencoder architecture is trained with the squared error function given in \eqref{error_func} with $G=10$ and $K=20$. This leads to leaving the layer before quantization (or $l_1$) untrained. Using an autoencoder whose input layer is much smaller than the number of coded LDPC symbols per codeword makes training further challenging. To handle these issues, we propose to periodically train the architecture for $k$ codewords and then utilize it for these $k$ codewords. The main reason for this training policy is related with the very poor generalization capability of the neural network due to one-bit quantization.

For the first (shorter codeword) LDPC code, the coded bits are first modulated with 16-QAM and then fed into the autoencoder. These coded symbols are processed by the autoencoder in blocks of $24$, i.e., $N=24$.  As can be observed in  Fig. \ref{fig:648_16qam}, using an LDPC code alone is not sufficient for one-bit quantization despite the fact that it decays very rapidly in the case of unquantized samples after 5dB. On the other hand, integrating an autoencoder with this LDPC code brings substantial improvement and leads to obtain a close performance with respect to the unquantized LDPC code. 
We repeat this experiment for 64-QAM in Fig. \ref{fig:648_64qam}. In comparison to 16-QAM modulation, our coding method that only sees one-bit quantized samples gives nearly the same performance with the LDPC code that processes the unquantized samples. Similar to 16-QAM modulation, this LDPC code alone does not work properly for 64-QAM if there is a one-bit ADC in the receiver.

The performance of the proposed code is also assessed for larger codeword lengths by integrating the autoencoder to the LDPC code that has a codeword length of 64800 bits. All the hyper-parameters of the autoencoder remain the same except that $N$ is taken as $64$ instead of $24$. This is associated with the fact that large blocks are needed for the autoencoder to capture the structures for large codewords. We observe nearly the same behavior as compared to the shorter LDPC codes as depicted in Fig. \ref{fig:64800_16qam} for 16-QAM signaling. 
\begin{figure} [!h]
\centering
\includegraphics[width=3.25in]{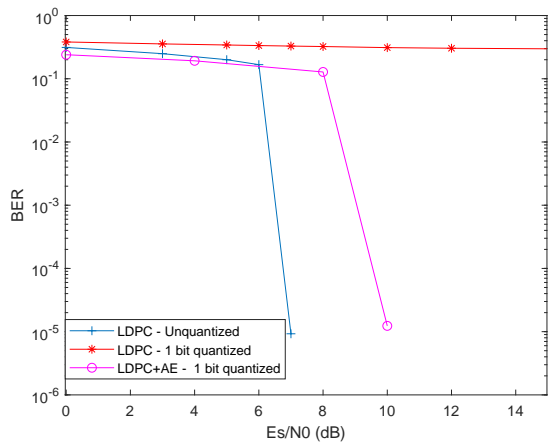}
\caption{The error rate for the code rate $\frac{1}{2}$ LDPC code with a codeword length of 64800 bits for 16-QAM}
\label{fig:64800_16qam}
\end{figure}

\section{Conclusions}
\label{Conclusions}
In this paper, a new design methodology is discussed for developing error correction codes for nonlinear channels by leveraging the merits of deep learning and using the current knowledge in coding theory. This idea is utilized to design a channel code for one-bit quantization. Our results show that the proposed method makes higher-order modulation formats operable for one-bit receivers. This obviously brings in a large spectral efficiency gain. As future work, better autoencoder architectures can be designed instead of using a couple of fully connected layers so as to improve our results. Furthermore, it is interesting to craft novel loss functions for one-bit quantization instead of using the canonical squared loss function, which heavily affects the training policy.

\bibliographystyle{IEEEtran}

\end{document}